\documentclass[amssymb,amsfonts]{amsart}
\usepackage{amssymb, latexsym}
\usepackage[T1]{fontenc}
\usepackage[english]{babel}
\usepackage{svn}
\usepackage{tensor}
\usepackage{hyperref}

\theoremstyle{plain}
\newtheorem{thm}{Theorem}

\theoremstyle{definition}
\newtheorem{defn}[thm]{Definition}

\theoremstyle{remark}
\newtheorem{rmk}[thm]{Remark}


\newcommand*{\Real}{\ensuremath{\mathbb{R}}}

\newcommand*{\set}[2]{\ensuremath{\left\{~ #1 ~\middle|~ #2 ~\right\}}}
\makeatletter

\newcommand*{\Rmnum}[1]{\expandafter\@slowromancap\romannumeral #1@}
\makeatother


\SVN $Date: 2012-03-01 11:09:07 +0100 (Thu, 01 Mar 2012) $
\SVN $Rev: 27 $

\begin{document}
\title[Positive mass theorem in 2D]{A positive mass theorem for two
spatial dimensions}
\author[W. W.-Y. Wong]{Willie Wai-Yeung Wong}
\address{\'{E}cole Polytechnique F\'{e}d\'{e}rale de Lausanne,
Switzerland}
\thanks{The author is supported in part by the Swiss National Science
Foundation.}
\thanks{Version \texttt{rev\SVNRev} of \texttt{\SVNRawDate}}
\email{willie.wong@epfl.ch}
\subjclass[2010]{83C80}

\begin{abstract}
We observe that an analogue of the Positive Mass Theorem in the
time-symmetric case for three-space-time-dimensional general 
relativity follows trivially from the Gauss-Bonnet theorem. In this
case we also have that the spatial slice is diffeomorphic to
$\Real^2$.
\end{abstract}

\maketitle

In this short note we consider Einstein's equation without
cosmological constant, that is
\[ R_{\mu\nu} - \frac12 Rg_{\mu\nu} = T_{\mu\nu}~, \]
on $1+2$ dimensional space-times. This theory has long been considered
as a toy model with possible applications to cosmic strings and domain
walls, or to quantum gravity. For a survey please refer to
\cite{Brown1988, Carlip1998} and references within. This low
dimensional theory is generally considered as un-interesting
\cite{Collas1977} due to the fact that Weyl curvature vanishes
identically in $(1+2)$ dimensions, a fact often interpreted in the
physics literature as the theory lacking gravitational degrees of
freedom. Furthermore, the theory does not reduce in a Newtonian limit
\cite{BaBuLa1986}: the exterior space-time to compact sources is
necessarily locally flat and is typically asymptotically conical
\cite{Carlip1998, DeJaHo1984, Deser1985}. 

For these space-times, by considering point-sources, it is revealed
\cite{DeJaHo1984} that the mass should be identified with the angle
defects near spatial infinity. For static space-times with spatial
sections diffeomorphic to $\Real^3$, it is also known that the
asymptotic mass can be related to the integral of scalar curvature on
the spatial slice, and hence under a dominant energy assumption must
be positive. 

The purpose of this note is to remark that the topological assumption
is unnecessary. 

Throughout we shall assume that $(\Sigma,g)$ is a complete
two-dimensional Riemannian manifold which represents a
\emph{time-symmetric} spatial slice in a three-dimensional Lorentzian
space-time $(M,\bar{g})$ (that is, trace of the second fundamental form 
of $\Sigma \hookrightarrow M$ vanishes identically; in other words, 
the slice is \emph{maximal}). We assume that the 
dominant energy condition holds for $\bar{g}$, and in particular the 
ambient Einstein tensor satisfies $\bar{G}_{\mu\nu}\xi^\mu\xi^\nu \geq
0$ for any time-like $\xi^\mu$. The Gauss equation then immediately 
implies that $g$ has non-negative scalar curvature. 

\begin{defn}\label{defncon}
A complete two-dimensional Riemannian manifold $(\Sigma,g)$ 
is said to be \emph{asymptotically conical} if there exists a compact
subset $K\subsetneq \Sigma$ where $\Sigma\setminus K$ has finitely
many connected components, and such that if $E$ is a connected
component of $\Sigma\setminus K$, there exists a diffeomorphism $\phi:
E\to (\Real^2\setminus \bar{B}(0,1))$ where in the Cartesian
coordinates on $\Real^2$ the line element satisfies
\[ \mathrm{d}s^2 - \left[\mathrm{d}x^2 + \mathrm{d}y^2 - \frac{1-P^2}{x^2 +
y^2} \left( x~\mathrm{d}y - y~\mathrm{d}x\right)^2\right] \in 
O_2\left((x^2 + y^2)^{-\epsilon}\right)\]
for some $\epsilon > 0$ and $P> 0$. The 
notation $f\in O_2(r^{-2\epsilon})$ is a shorthand for
\[ \left|f\right| + r\left|\partial f\right| + r^2\left|\partial^2
f\right| \leq Cr^{-2\epsilon}~. \]
\end{defn}

\begin{rmk}
The decay condition is sufficient to imply that the scalar curvature
$S$ of $g$ is integrable on $\Sigma$. Note that in polar coordinates
$x = r\cos\theta$ and $y=r\sin\theta$ the conical metric in the square
brackets can be written as the conical 
\[ \mathrm{d}r^2 + P^2 r^2 \mathrm{d}\theta^2 \]
where we see that $m = 2\pi (1-P)$ is the angle defect for parallel
transport around the tip of the cone. 
\end{rmk}

\begin{thm}\label{maintheorem}
If $(\Sigma,g)$ is a complete asymptotically conical two-dimensional
orientable Riemannian manifold with pointwise non-negative scalar 
curvature, then
$\Sigma$ is diffeomorphic to $\Real^2$ and $m = 2\pi (1-P)$ is
non-negative. If furthermore $m = 0$ then $(\Sigma,g)$ is isometric to
the Euclidean plane. 
\end{thm}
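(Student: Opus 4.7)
The plan is to apply the Gauss--Bonnet theorem to a suitable exhaustion of $\Sigma$. Fix $K \subsetneq \Sigma$ as in Definition~\ref{defncon} (shrinking $K$ if necessary so that $\partial K$ is smooth), with ends $E_1, \dots, E_N$ and diffeomorphisms $\phi_i : E_i \to \Real^2 \setminus \bar{B}(0,1)$. For $R > 1$, let
\[
\Omega_R \eqdef K \cup \bigcup_i \phi_i^{-1}\bigl(\bar{B}(0,R) \setminus B(0,1)\bigr),
\]
a compact surface whose boundary consists of the smooth circles $C_{i,R} \eqdef \phi_i^{-1}(\{x^2+y^2 = R^2\})$. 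Recalling that the scalar curvature in two dimensions equals twice the Gauss curvature, Gauss--Bonnet applied to $\Omega_R$ reads
\[
\tfrac12 \int_{\Omega_R} S \, dA + \sum_{i=1}^{N} \int_{C_{i,R}} k_g \, ds = 2\pi\, \chi(\Omega_R).
\]

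The main computational step is to pass to the limit $R \to \infty$. On the model cone $\mathrm{d}r^2 + P_i^2 r^2 \, \mathrm{d}\theta^2$, a direct calculation gives $\int_{\{r=R\}} k_g \, \mathrm{d}s = 2\pi P_i$ for every $R$, and the $O_2$-decay assumption contributes only an $O(R^{-2\epsilon})$ correction. Combining this with the integrability of $S$ from the remark and with the topological stability $\chi(\Omega_R) = \chi(\Sigma)$ for $R$ large, the limit takes the form
\[
\tfrac12 \int_\Sigma S \, dA + 2\pi \sum_{i=1}^{N} P_i = 2\pi\, \chi(\Sigma).
\]

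The topology then follows. Because each end is a half-cylinder, $\Sigma$ retracts onto the compact orientable surface $K$ of some genus $g \geq 0$ with $N$ boundary components, so $\chi(\Sigma) = 2 - 2g - N$. The hypotheses $S \geq 0$ and $P_i > 0$ yield
\[
2g + N + \sum_{i=1}^{N} P_i \leq 2,
\]
which, since $N \geq 1$, forces $g = 0$, $N = 1$, and $0 < P \leq 1$. Hence $\Sigma$ is diffeomorphic to $\Real^2$ and $m = 2\pi(1-P) \geq 0$. If $m = 0$, then $P = 1$ and the integral identity forces $S \equiv 0$; a complete, simply connected, flat Riemannian surface is isometric to the Euclidean plane, concluding the rigidity.

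The step I expect to require the most care is the boundary computation: verifying under only the $O_2$-decay of the metric that the geodesic curvature integral on the circles $C_{i,R}$ converges to $2\pi P_i$. Once this and the integrability of $S$ are in hand, the remainder of the argument is a straightforward combination of Gauss--Bonnet with classical surface topology.
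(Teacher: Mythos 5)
Your proof is correct and follows essentially the same route as the paper: Gauss--Bonnet applied to a compact exhaustion whose boundary circles sit at radius $R$ in each conical end, with the boundary integral converging to $2\pi P_i$ and the interior term to $\tfrac12\int_\Sigma S$. The only organizational difference is that the paper first settles the topology qualitatively (the boundary circles have positive geodesic curvature, so the compact core has positive Euler characteristic and is a disc) before passing to the limit for the mass identity, whereas you read both the topology and the inequality $m\geq 0$ off a single limiting Gauss--Bonnet identity.
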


\begin{proof}
Enumerate from $1\ldots N$ the asymptotic ends $(E_i,g_i)$ with
diffeomorphisms $\phi_i$ and constant $P_i$. By the asymptotic
structure, for sufficiently large $R_i$ the curve $\gamma_i =
\set{\phi_i^{-1}(x,y)}{x^2 + y^2 = R_i^2}$
in the end $\Sigma_i$ has positive geodesic curvature, if we choose
the orientation so that the inward normal is toward the compact set
$K$. Let $\Sigma_0\supsetneq K$ denote the compact manifold with 
boundary in $\Sigma$ that is bounded by the $\gamma_i$. Applying
Gauss-Bonnet theorem, using the fact that the geodesic curvatures are
all signed and the scalar curvature is non-negative, we have that
$\Sigma_0$ has positive Euler characteristic. As $\Sigma_0$ is
orientable and connected, and has nonempty boundary, it must be
diffeomorphic to a disc. Hence $\Sigma$ has only one asymptotic end
and is diffeomorphic to $\Real^2$. For sufficiently large $R$ we let
$\Sigma_R$ denote the compact region bounded by $x^2 + y^2 = R^2$. 
Using the Gauss-Bonnet theorem
again, along with the decay properties of the metric, we see that 
$m$, the angle defect, is in fact given by 
\[ m = \lim_{R\to\infty} \frac12 \int_{\Sigma_R} S ~\mathrm{dvol}_g =
\frac12 \int_\Sigma S~\mathrm{dvol}_g~.\]
Hence $m$ is necessarily nonnegative, with equality to 0 only in the
case $S\equiv 0$. 
\end{proof}

\begin{rmk}
One can analogously define the ``quasilocal mass'' $m_\gamma$ associated
to $\gamma\subsetneq\Sigma$ a simple closed curve by letting 
$m_\gamma$ be the angle defect for parallel transport around $\gamma$.
Then it is easy to see the this quantity has a monotonicity property:
if $\gamma_1$ is to the ``outside'' of $\gamma_2$, let $\Sigma_{1,2}$
be the annular region bounded by the two curves, we must have
\[ m_{\gamma_1} - m_{\gamma_2} = \frac12 \int_{\Sigma_{1,2}}
S~\mathrm{dvol}_g~. \]
\end{rmk}

\begin{rmk}
It was pointed out to the author by Julien Cortier that some similar
considerations in the asymptotically hyperbolic case was mentioned by
Chru\'sciel and Herzlich; see Remark 3.1 in \cite{ChrHer2003}. 

Indeed, Theorem \ref{maintheorem} follows also from some more powerful
classical theorems in differential geometry. The topological
classification can be deduced from, e.g.\ Proposition 1.1 in
\cite{LiTam1991}. One can also deduce the theorem (with some work)
from Shiohama's Theorem A \cite{Shioha1985}. As shown above, however,
in the very restricted case considered in this note the desired result
can be obtained with much less machinery. 
\end{rmk}

\begin{rmk}
The author would also like to thank Gary Gibbons for pointing out that
a similar argument to the proof of Theorem \ref{maintheorem}
was already used by Comtet and Gibbons (see end of section
2 of \cite{ComGib1988}) to establish a positive mass condition on
cylindrical space-times about a cosmic string; the main difference is
that in the above theorem we contemplate, and rule out, the possibility 
of multiple asymptotic ends, as well as non-trivial topologies inside
a compact region.
\end{rmk}

\begin{rmk}
One can also ask about asymptotically cylindrical spaces, which can be
formally viewed as a limit of cones. Indeed, if
in Definition \ref{defncon} we replace the asymptotic condition 
\[ \mathrm{d}s^2 \to \mathrm{d}r^2 + P^2 r^2 \mathrm{d}\theta^2 \]
with
\[ \mathrm{d}s^2 \to \mathrm{d}r^2 + (P^2 r^2 + p^2)
\mathrm{d}\theta^2~,\]
then the limit $P= 0$ is no longer degenerate, and in fact corresponds
to a spatial slice that is cylindrical at the end. In this case,
however, the topological statement in Theorem \ref{maintheorem} is no
longer true: the standard cylinder
$\mathbb{S}^1\times\mathbb{R}^1$ is flat, is asymptotically
cylindrical, and has \emph{two} asymptotic ends. However, it is easily
checked using the same method of proof as Theorem \ref{maintheorem}
that this is the only multiple-ended asymptotically cylindrical
surface to support a non-negative scalar curvature. 
\end{rmk}

\bibliographystyle{amsalpha}
\bibliography{../bib_files/jabrefmaster.bib}

\end{document}